\setlist[itemize]{leftmargin=*}
\setlist[enumerate]{leftmargin=*}
\newcommand{\PP}{\mathbb{P}}
\newcommand{\RR}{\mathbb{R}}
\newcommand{\vv}{\bm{v}}
\newcommand{\vx}{\bm{x}}
\newcommand{\vdelta}{\bm{\delta}}
\newcommand{\sD}{\mathcal{D}}
\newcommand{\set}[1]{\{ #1 \}}
\newcommand{\norm}[1]{\lVert #1 \rVert}
\newcommand{\lb}[1]{{#1}^{\perp}}
\newcommand{\ub}[1]{{#1}^{\top}}
\newtheorem{theorem}{Theorem}
\newtheorem{proposition}{Proposition}
\theoremstyle{example}
\theoremstyle{remark}
\theoremstyle{assumption}
\theoremstyle{definition}
\title{Verifiably Robust Conformal Prediction}
\author{%
  Linus Jeary\thanks{Authors contributed equally.} \\
  Department of Informatics \\
  King's College London, UK \\
  \texttt{linus.jeary@kcl.ac.uk} \\
  \And
  Tom Kuipers$^*$ \\
  Department of Informatics \\
  King's College London, UK \\
  \texttt{tom.kuipers@kcl.ac.uk} \\
  \AND
  Mehran Hosseini \\
  Department of Informatics \\
  King's College London, UK \\
  \texttt{mehran.hosseini@kcl.ac.uk} \\
  \And
  Nicola Paoletti\\
  Department of Informatics\\
  King's College London, UK\\
  \texttt{nicola.paoletti@kcl.ac.uk}\\
}
\begin{document}

\maketitle

\begin{abstract}
  Conformal Prediction (CP) is a popular uncertainty quantification
method that provides distribution-free, statistically valid prediction
sets, assuming that training and test data are exchangeable. In such a
case, CP's prediction sets are guaranteed to cover the (unknown) true
test output with a user-specified probability. Nevertheless, this
guarantee is violated when the data is subjected to adversarial
attacks, which often result in a significant loss of coverage.
Recently, several approaches have been put forward to recover CP
guarantees in this setting. These approaches leverage variations of
randomised smoothing to produce conservative sets which account for
the effect of the adversarial perturbations. They are, however,
limited in that they only support $\ell_2$-bounded perturbations and
classification tasks.  This paper introduces \emph{VRCP (Verifiably
Robust Conformal Prediction)}, a new framework that leverages recent
neural network verification methods to recover coverage guarantees
under adversarial attacks. Our VRCP method is the first to support
perturbations bounded by arbitrary norms including $\ell_1$, $\ell_2$,
and $\ell_\infty$, as well as regression tasks. We evaluate and
compare our approach on image classification tasks (CIFAR10, CIFAR100,
and TinyImageNet) and regression tasks for deep reinforcement learning
environments. In every case, VRCP achieves above nominal coverage and
yields significantly more efficient and informative prediction regions
than the SotA.

\end{abstract}

\section{Introduction}
\label{sec:introduction}
Conformal Prediction (CP)
\citep{vovk2005algorithmic,angelopoulos2021gentle} is a popular
uncertainty quantification method. In essence, it is a model-agnostic,
distribution-free framework that allows one to construct prediction
sets that are guaranteed to include the true (unknown) output with
probability greater than \(1 - \alpha\), where \(\alpha \in (0,1)\) is
a user-specified miscoverage/error rate. In other words, for a test
point \((\vx_{n+1}, y_{n+1})\), CP seeks to construct a prediction set
\({C}(\vx_{n+1})\) such that the following coverage (a.k.a. validity)
guarantee holds:
\begin{equation}
  \label{eq:marginal_cp}
  \PP_{\vx_{n+1},y_{n+1}} [ y_{n+1} \in {C}(\vx_{n+1}) ] \geq 1 - \alpha.
\end{equation}
Importantly, the above guarantee holds when the calibration data, used
to construct \({C}(\vx_{n+1})\), and the test point are exchangeable
(a special case is when calibration and test data are i.i.d.).

When exchangeability is violated, e.g., in the presence of test-time
distribution shifts, CP's coverage guarantee~\eqref{eq:marginal_cp}
ceases to hold, and we cannot rely on the prediction sets it produces.
In this work, we address shifts induced by adversarial perturbations
on the test inputs. In particular, we focus on perturbations in the
form of additive \(\ell_p\)-bounded noise.

To recover guarantees under adversarial inputs, the general mechanism
is to inflate the prediction set to permit larger degrees of
uncertainty.  However, special care must be taken to avoid producing
overly large or even trivial sets -- i.e. those containing all
possible outputs -- as such sets do not provide any useful inference.

\begin{figure}[t]
  \centering
  \includegraphics[width=1.01\textwidth]{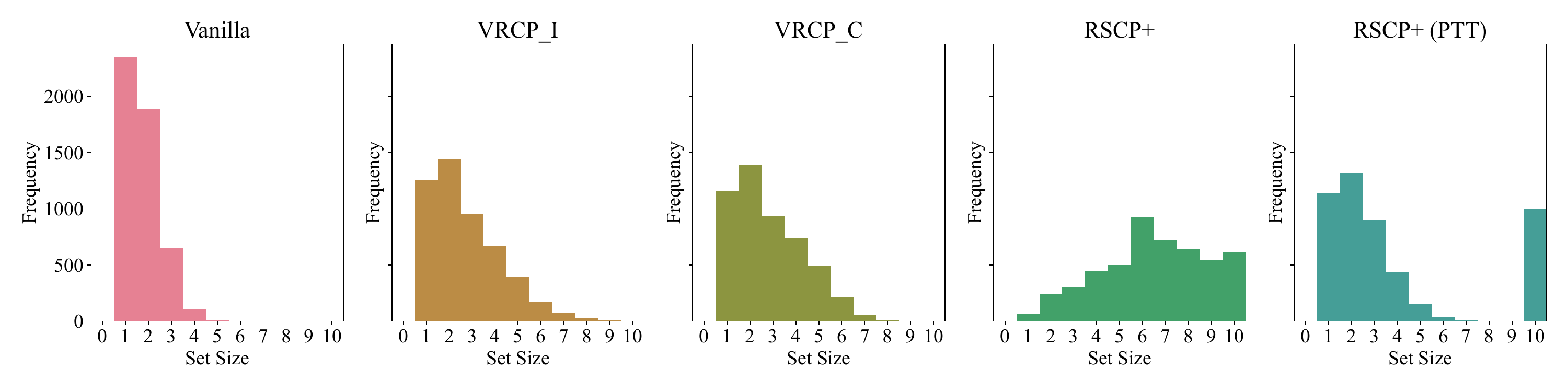}
  \caption{Distribution of prediction set sizes for vanilla conformal
    prediction (vanilla CP) which violates \cref{eq: MarginalVCP}, as
    well as for our proposed robust algorithms (VRCP--I and VRCP--C)
    along with the SotA (RSCP+ and RSCP+ (PTT), see
    \cref{sec:rel-work}) on the CIFAR10 dataset. As we observe,
    VRCP--I and VRCP--C closely resemble the spread of vanilla CP
    prediction set sizes, whilst the SotA falls short of achieving
    this. Here we use an adversarial perturbation of radius
    \(\epsilon = 0.02\), error rate \(\alpha = 0.1\), number of splits
    \(n_{\text{splits}} = 50\) and smoothing parameter (used in RSCP+
    and RSCP+ (PTT)) \(\sigma = 2\epsilon\).}
  \label{figure: Set Size}
  \vspace{-1em}
\end{figure}
\paragraph{Contributions}
We propose a CP framework that provides statistically valid prediction
sets despite the presence of \(\ell_{p}\)-bounded adversarial
perturbations at inference time.  Formally, for any adversarially
perturbed test point \(\tilde{\vx}_{n+1}=\vx_{n+1}+\vdelta\), our
method produces adversarially robust sets \(C_{\epsilon}\) that enjoy
the following guarantee:
\begin{equation}
  \label{eq: MarginalVCP}
  \PP_{} [ y_{n+1} \in C_{\epsilon}(\tilde{\vx}_{n+1}) ] \geq 1 - \alpha \quad \forall \vdelta \ \text{ s.t. } \ \norm{\vdelta}_p\leq \epsilon.
\end{equation}
While CP uses an underlying predictor \(f\), often a neural network
(NN), to construct prediction regions, the novelty of our approach is
to leverage NN verification algorithms to compute upper and lower
output bounds of \(f(\vx')\) for any
\(\vx'\in B_{\epsilon}(\vx) = \{\vx' : \norm{\vx'-\vx}_p\leq
\epsilon\}\). We use these bounds to inflate the CP regions, resulting
in provably robust and \emph{efficient} prediction sets.  To the best
of our knowledge, this is the first work that combines NN verification
algorithms and CP to construct adversarially robust prediction
sets. We call our method \emph{VRCP (Verifiably Robust Conformal
  Prediction)}.

Recent work (discussed in Section~\ref{sec:rel-work}) achieves
adversarially robust coverage using probabilistic methods,
specifically, randomised smoothing~\citep{cohen2019certified}. Our
approach overcomes some of the theoretical and empirical drawbacks of
these prior methods, which are restricted to classification tasks with
\(\ell_2\)-norm bounded guarantees and are overly conservative in
practice.

Thanks to our verification-based approach, VRCP is the first to extend
adversarially robust conformal prediction to regression tasks and the
first to go beyond \(\ell_2\)-norm bounded guarantees. In
\cref{sec:theory}, we introduce two versions of VRCP that apply
verification at calibration and inference time, respectively. Further,
in \cref{sec:evaluation}, we empirically validate our theoretical
guarantees and demonstrate a direct improvement over previous work in
terms of prediction set efficiency (i.e., average set size) compared
to prior work. \cref{figure: Set Size} shows an extract of our
results on CIFAR10, demonstrating that VRCP yields more informative
(tighter) prediction regions, a trend that we observe experimentally
across all our benchmarks.


%
\section{Preliminaries}
\label{sec:preliminaries}

We denote with \(\RR_+\) the set of positive real numbers. Vectors
\(\vx \in \RR^d\) are shown in bold italic and scalars \(x \in \RR\)
in italic typeface.  We denote the norm used to make \(\RR^d\) a
normed vector space by \(\norm{\cdot}\). This could for instance be
\(\ell_1, \ell_2\), or \(\ell_{\infty}\)-norm. Whenever a specific
norm is intended, we indicate it using an index, e.g.,
\(\norm{\cdot}_2\) indicates the \(\ell_2\)-norm. We denote the
\(\epsilon\)-ball around a point \(\vx \in \RR^d\) with respect to the
used norm by \(B_{\epsilon}(\vx)\).

\subsection{Conformal Prediction}
We provide a brief overview of the inductive (or split) vanilla CP
approach. Suppose we have a dataset \(\sD\) containing pairs
\((\vx,y)\) sampled i.i.d.\ from an (unknown) data-generating
distribution over a feature space \(X \subseteq \mathbb{R}^d\) and
label space \(Y\) such that
\(\sD = \set{(\vx_1, y_1), \dots, (\vx_m, y_m)}\).

We partition the dataset into disjoint training and calibration sets
\(\sD_{\text{train}}\) and \(\sD_{\text{cal}}\), letting
\(n = \vert \sD_{\text{cal}} \vert\).  We fit a predictor \(f\) on
\(\sD_{\text{train}}\) and define a score function
\(S: (X \times Y) \to \RR\) as some notion of prediction error, such
as \(S(\vx,y)= \| f(\vx) - y \|\) when \(f\) is a regressor, or
\(S(\vx,y)= 1 - f(\vx)_y\) when \(f\) is a classifier with
\(f(\cdot)_y\) being \(y\)'s predicted likelihood.

After applying the score function to all calibration points, we
construct the score distribution as
\(F = \nicefrac{\delta_{\infty}}{(n+1)} + \sum_{i=1}^{n}
\nicefrac{\delta_{s_i}}{n+1}\), where \(\delta_s\) is the Dirac
distribution with parameter \(s\), \(s_i=S(\vx_i,y_i)\) and
\(\delta_{\infty}\) represents the unknown score (potentially
infinite) of the test point.

Given a miscoverage/error rate \(\alpha\) and a test point
\((\vx_{n+1}, y_{n+1})\), we define the prediction set
\(C(\vx_{n+1})\) by including all labels that appear sufficiently
likely w.r.t.\ the score distribution:
\(C(\vx_{n+1}) = \set{y\in Y : S(\vx_{n+1}, y) \leq Q_{1-\alpha}(F)
}\), where \(Q_{1-\alpha}(F)\) is the \(1-\alpha\) quantile of
\(F\). This set satisfies the marginal coverage guarantee in
\cref{eq:marginal_cp} if the test point and the calibration points are
exchangeable.

\subsection{Adversarial Attacks}
Neural networks have been shown to be vulnerable to \emph{adversarial
  attacks}, i.e., small changes to their input that jeopardise the
prediction \citep{Szegedy+14adversarial,biggio2018wild}. This notion
can be formally defined as maximising an adversarial objective
function (e.g., the loss of the true label) subject to
\(\norm{\vx - \tilde{\vx}} \leq \epsilon\). Alternatively, it can be
defined as finding an adversarial example \(\tilde{\vx} \in \RR^m\),
such that \(\norm{\vx - \tilde{\vx}} \leq \epsilon\) and
\(\norm{f(\tilde{\vx}) - y} \geq \delta\) for a given neural network
\(f: \RR^d \to \RR^n\).

\subsection{Neural Network Verification}
Various approaches have been proposed to verify the robustness of NNs
against adversarial attacks. These approaches can be divided into
complete and incomplete algorithms. Given a neural network \(f\), a
verifier is \emph{complete} if it allows computing exact bounds
\(f^{\perp}\) and \(f^{\top}\) for the image
\(f(B_{\epsilon}(\vx)) = \set{f(\vx') : \vx' \in B_{\epsilon}(\vx)}\),
i.e., such that
\begin{equation}
  \label{eq: Exact Bounds}
  \lb{f} = \min_{\vx' \in B_{\epsilon}(\vx)} \set{f(\vx')},
  \quad
  \ub{f} = \max_{\vx' \in B_{\epsilon}(\vx)} \set{f(\vx')},
\end{equation}
where \(\min\) and \(\max\) are computed coordinate-wise for
vector-valued NNs. A verifier is \emph{incomplete}, but
\textit{sound}, if it computes bounds that are valid but not exact,
i.e., such that:
\begin{equation}
  \label{eq: Approximate Bounds}
  \lb{f} \leq \min_{\vx' \in B_{\epsilon}(\vx)} \set{f(\vx')},
  \quad
  \ub{f} \geq \max_{\vx' \in B_{\epsilon}(\vx)} \set{f(\vx')}.
\end{equation}
Our results are verifier-agnostic, meaning that they are valid for any
verifier that can produce exact bounds (as in \cref{eq: Exact Bounds})
or conservative bounds (as in \cref{eq: Approximate Bounds}),
depending on the completeness or incompleteness of the verifier used.
The fastest and simplest way to compute the bounds in \cref{eq:
  Approximate Bounds} is to propagate the bounds on the input
\(B_{\epsilon}(\vx)\) through the network to compute the output
bounds. Several methods based on this approach have been proposed
\citep{Gowal+18,Wang+19,Zhang+18crown,BattenHL24,Lopez+23}. At the
expense of fast computation speed, these methods may result in loose
bounds in \cref{eq: Approximate Bounds}. On the other hand, several
complete methods \citep{PulinaT10,Katz+17,HosseiniL23} for NN
verification have been put forward. Even though these methods compute
exact bounds, their downside is their high computational cost.


%
\section{Related Work}
\label{sec:rel-work}
\paragraph{Adversarially Robust Conformal Prediction}
\cite{gendler2021adversarially} introduced an algorithm called
Randomly Smoothed Conformal Prediction (RSCP) that integrates
randomised smoothing
\citep{duchi2012randomized,cohen2019certified,salman2019provably} with
CP to provide robust coverage under adversarial attacks.  RSCP
replaces the CP score function \(S(\vx,y)\) with a smoothed score
\(\widetilde{S}(\vx,y)\) obtained by averaging the values of
\(S(\vx + \vv,y)\) over \(n_{\text{MC}}\) realisations of a Gaussian
noise vector \(\vv\sim \mathcal{N}(0, \sigma^2 I)\), for a given
smoothing level \(\sigma\).  To correct for potential \(\ell_2\)-norm
bounded \(\epsilon\) perturbations at inference time, the critical
value computed over the smoothed scores is inflated by
\(\epsilon/\sigma\).  This method produces empirically sound results,
but the provided formal guarantees were found to be invalid in a later
work \citep{yan2023provably}, as discussed below.

\paragraph{Provably Robust Conformal Prediction}
\cite{yan2023provably} address the issue with the robustness guarantee
of \cite{gendler2021adversarially} by correctly bounding the
estimation error caused by the Monte-Carlo sampling used when
generating the smoothed scores. The bound introduces an additional
hyperparameter \(\beta\) such that they now find the
\(Q_{1-\alpha+2\beta}\) of smooth calibrated scores and inflate by
Hoeffding's bound \(\sqrt{\nicefrac{-\ln(\beta)}{2n_{\text{MC}}}}\)
before correcting by \(\epsilon/\sigma\). Furthermore, the smoothed
scores of the test points are decreased by an empirical Bernstein
bound. This further inflation of the critical value and deflation of
smooth scores for each test point often cause their amended algorithm,
so-called RSCP+, to generate trivial prediction sets.

To address this issue, the authors introduce two methods to improve
the efficiency of RSCP+.  Firstly, they use robustly calibrated
training (RCT), a training-time regularisation technique that
penalises NN parameters that contribute to high scores for the true
label. Our approach assumes that the underlying classifier is given;
hence, we do not evaluate RCT in our experiments.

Secondly, they implement a post-training transformation (PTT), which
aims to decrease the values of the smoothed calibration scores that
lie between \(Q_{1-\alpha}\) (the critical value of the base scores)
and \(\widetilde{Q}_{1-\alpha}\) (that of the smoothed scores). To
this purpose, they transform the CDF of the smoothed scores
\(\widetilde{S}\) by composing learned ranking and sigmoid
transformations with hyperparameters \(b\) and \(T\) using a holdout
set \(\sD_{\text{hold}}\) sampled i.i.d from \(\sD_{\text{cal}}\).
PTT however is not theoretically guaranteed to improve the average set
sizes computed by RSCP+ and, in many cases, its efficacy is largely
dependent on how representative the sampled holdout set is of athe
calibration set. We demonstrate the effect of PTT's sample dependence
empirically in \cref{subsec: classExp}.

\paragraph{Probabilistically Robust Conformal Prediction}
\cite{ghosh2023probabilistically} also focus on the adversarial
setting but maintain a relaxed form of robust coverage, where input
perturbations \(\vdelta\) are drawn from a specific distribution and
only a proportion of such perturbations are sought to be covered. In
contrast, we do not make assumptions about the noise distribution, and
we account for any \(\epsilon\)-bounded perturbation.

All the works\footnote{We are aware of related contemporaneous work by
  \citet{ZargarbashiAB24}. However, at the time of writing, neither
  the manuscript nor the code were available.} discussed here rely on
randomised smoothing \citet{duchi2012randomized} and as such are
limited to the \(\ell_2\)-norm. In contrast, our VRCP approach relies
on NN verifiers, can be used with any \(\ell_p\)-norms supported by
the verification method, and does not require smoothing
hyperparameters or holdout sets.


%
\section{Verifiably Robust Conformal Prediction (VRCP)}
\label{sec:theory}
In this section, we formally introduce two variants of VRCP. Both
methods allow us to construct adversarially robust prediction sets at
inference time.

The first variant, \textit{VRCP via Robust Inference (VRCP--I)},
employs NN verification at inference time to compute a lower bound of
the score for the given test input (best-case score), thereby
obtaining more conservative regions.  The calibration procedure is
computed as in standard CP.  The second variant, \textit{VRCP via
  Robust Calibration (VRCP--C)}, instead uses NN verification at
calibration time to derive upper bounds for the calibration scores
(worst-case), thereby obtaining a more conservative calibration
threshold (critical value). This allows us to use the regular scores
at inference time, without requiring NN verification.

\subsection{Verifiably Robust Conformal Prediction via Robust Inference (VRCP--I)}

Given a calibration set \(\sD_{\text{cal}}\), a test input
\(\vx_{n+1}\), and score function \(S(\cdot, \cdot)\), we compute the
prediction set for \(\vx_{n+1}\) as follows.
\begin{enumerate}
\item For each \(y \in Y\) we compute,
  \begin{equation}
    \label{eq: VRCP_I_score}
    \lb{s}(\vx_{n+1}, y) \leq \inf_{\vx' \in B_{\epsilon}(\vx_{n+1})} S(\vx', y).
  \end{equation}  
\item The robust prediction set is then defined as
  \begin{equation}
    \label{eq: VRCP}
    C_{\epsilon}(\vx_{n+1}) = \set{y : \lb{s}(\vx_{n+1}, y) \leq Q_{1 - \alpha}(F)}.
  \end{equation}
\end{enumerate}

Below, we show that we are able to maintain the marginal coverage
guarantee from \cref{eq: MarginalVCP} for any \(\ell_p\)-norm bounded
adversarial attack.

\begin{theorem}
  Let \(\tilde{\vx}_{n+1}=\vx_{n+1}+\vdelta\) for a clean test sample
  \(\vx_{n+1}\) and \(\norm{\vdelta}_p\leq \epsilon\). The prediction
  set \(C_{\epsilon}(\tilde{\vx}_{n+1})\) defined in \cref{eq: VRCP}
  satisfies
  \( \PP \left[ y_{n+1} \in C_{\epsilon}(\tilde{\vx}_{n+1}) \right]
  \geq 1 - \alpha.  \)
\end{theorem}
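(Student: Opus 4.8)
The plan is to reduce the robust coverage claim to the ordinary (clean) conformal guarantee in \cref{eq:marginal_cp} by exploiting the symmetry of the $\ell_p$-ball. The one observation that drives everything is that, since $\tilde{\vx}_{n+1}=\vx_{n+1}+\vdelta$ with $\norm{\vdelta}_p\leq\epsilon$, we have $\norm{\vx_{n+1}-\tilde{\vx}_{n+1}}_p=\norm{\vdelta}_p\leq\epsilon$, so the clean point is feasible for the verifier's infimum: $\vx_{n+1}\in B_{\epsilon}(\tilde{\vx}_{n+1})$. In other words, although verification is run at the observed (perturbed) input, the unknown clean input is guaranteed to lie inside the very ball over which we minimise.

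First I would chain the lower-bound inequalities at the observed point. Applying \cref{eq: VRCP_I_score} at $\tilde{\vx}_{n+1}$ gives $\lb{s}(\tilde{\vx}_{n+1}, y_{n+1}) \leq \inf_{\vx'\in B_{\epsilon}(\tilde{\vx}_{n+1})} S(\vx', y_{n+1})$, and since $\vx_{n+1}$ is a feasible point of this infimum by the remark above, the right-hand side is at most $S(\vx_{n+1}, y_{n+1})$. Writing $s_{n+1}:=S(\vx_{n+1}, y_{n+1})$ for the clean test score, this yields
\begin{equation}
  \lb{s}(\tilde{\vx}_{n+1}, y_{n+1}) \leq s_{n+1}.
\end{equation}
I would stress that this holds whether the verifier is complete or merely sound, because a conservative (i.e., smaller) lower bound only enlarges $C_{\epsilon}(\tilde{\vx}_{n+1})$ and hence can only help coverage.

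Next I would convert this into an event inclusion. By the definition in \cref{eq: VRCP}, the event $y_{n+1}\in C_{\epsilon}(\tilde{\vx}_{n+1})$ is precisely $\lb{s}(\tilde{\vx}_{n+1}, y_{n+1})\leq Q_{1-\alpha}(F)$, so the displayed inequality gives the inclusion $\set{s_{n+1}\leq Q_{1-\alpha}(F)}\subseteq\set{y_{n+1}\in C_{\epsilon}(\tilde{\vx}_{n+1})}$ and therefore $\PP[y_{n+1}\in C_{\epsilon}(\tilde{\vx}_{n+1})]\geq \PP[s_{n+1}\leq Q_{1-\alpha}(F)]$. Finally, since the calibration step of VRCP--I is identical to vanilla CP, the clean scores $s_1,\dots,s_n,s_{n+1}$ are exchangeable, so the standard split-CP quantile argument underlying \cref{eq:marginal_cp} gives $\PP[s_{n+1}\leq Q_{1-\alpha}(F)]\geq 1-\alpha$, closing the chain. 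The proof is short, and I expect the only point requiring care to be conceptual rather than technical: recognising that verification must be applied at the perturbed point while exchangeability (and hence the coverage bound) is stated for the clean scores, with ball symmetry being exactly what reconciles the two.
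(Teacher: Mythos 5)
Your proof is correct and follows essentially the same route as the paper's: both hinge on the observation that \(\vx_{n+1}\in B_{\epsilon}(\tilde{\vx}_{n+1})\) (by symmetry of the norm ball), chain this through the verified lower bound \(\lb{s}(\tilde{\vx}_{n+1},y_{n+1})\leq \inf_{\vx'\in B_{\epsilon}(\tilde{\vx}_{n+1})}S(\vx',y_{n+1})\leq S(\vx_{n+1},y_{n+1})\), and then invoke the vanilla CP guarantee on the clean score. The only difference is presentational: you chain the score inequalities pointwise and then pass to an event inclusion, whereas the paper writes the same chain directly as a sequence of probability inequalities.
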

\begin{proof}
  We obtain that
  \begin{align*}
    \PP \left[ y_{n+1} \in C_{\epsilon}(\tilde{\vx}_{n+1}) \right]& = \PP \left[\lb{s}(\tilde{\vx}_{n+1},  y_{n+1}) \leq Q_{1 - \alpha}(F) \right]\\
                                                             & \geq \PP \left[ \inf_{\vx' \in B_{\epsilon}(\tilde{\vx}_{n+1})}S(\vx',  y_{n+1}) \leq Q_{1 - \alpha}(F) \right] & \text{by \cref{eq: VRCP_I_score}}\\
                                                             & \geq \PP \left[S(\vx_{n+1},  y_{n+1}) \leq Q_{1 - \alpha}(F) \right] \ \geq \ 1 - \alpha. \qedhere
  \end{align*}
\end{proof}
\subsection{Verifiably Robust Conformal Prediction via Robust Calibration}
Given a calibration set \(\sD_{\text{cal}}\), a test input
\(\vx_{n+1}\), and score function \(S(\cdot, \cdot)\), we compute the
robustly calibrated prediction set for \(\vx_{n+1}\) as follows.

\begin{enumerate}
\item We compute the upper-bound calibration distribution as:
  \begin{equation}
    \label{eq: VRCP_C_score}
    \ub{F} = \frac{\delta_{\infty}}{(n+1)} + \sum_{i=1}^{n} \frac{\delta_{\ub{s}_i}}{n+1} \text{, where }
    \ub{s}_{i} \geq \sup_{\vx' \in B_{\epsilon}(\vx_{i})} S(\vx', y_i).
  \end{equation}
\item The robust post-calibration prediction set is then defined as
\end{enumerate}
\begin{equation}
  \label{eq: VRCCP}
  C_{\epsilon}(\vx_{n+1}) = \set{y : S(\vx_{n+1}, y) \leq Q_{1 - \alpha}(\ub{F})}.
\end{equation}

\begin{theorem}
  \label{thm: VRCPCmarg}
  Let \(\tilde{\vx}_{n+1}=\vx_{n+1}+\vdelta\) for a clean test sample
  \(\vx_{n+1}\) and \(\norm{\vdelta}_p\leq \epsilon\). The prediction
  set \(C_{\epsilon}(\tilde{\vx}_{n+1})\) defined in \cref{eq: VRCCP}
  satisfies
  \( \PP \left[ y_{n+1} \in C_{\epsilon}(\tilde{\vx}_{n+1}) \right]
  \geq 1 - \alpha.\)
\end{theorem}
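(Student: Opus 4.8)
The plan is to mirror the structure of the VRCP--I proof but to push the robustification into the calibration distribution rather than into the inference-time score. The coverage event unfolds as \(\{y_{n+1} \in C_{\epsilon}(\tilde{\vx}_{n+1})\} = \{S(\tilde{\vx}_{n+1}, y_{n+1}) \leq Q_{1-\alpha}(\ub{F})\}\), so the first move is to find a convenient upper bound on the perturbed test score \(S(\tilde{\vx}_{n+1}, y_{n+1})\) that ties back to a quantity exchangeable with the calibration data.

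The key observation is that since \(\norm{\vdelta}_p \leq \epsilon\), the perturbed point satisfies \(\tilde{\vx}_{n+1} \in B_{\epsilon}(\vx_{n+1})\), and therefore \(S(\tilde{\vx}_{n+1}, y_{n+1}) \leq \sup_{\vx' \in B_{\epsilon}(\vx_{n+1})} S(\vx', y_{n+1}) =: \ub{s}_{n+1}\); that is, the worst-case score of the clean test point dominates the perturbed score. This yields the set inclusion \(\{\ub{s}_{n+1} \leq Q_{1-\alpha}(\ub{F})\} \subseteq \{S(\tilde{\vx}_{n+1}, y_{n+1}) \leq Q_{1-\alpha}(\ub{F})\}\), reducing the claim to establishing a lower bound on \(\PP[\ub{s}_{n+1} \leq Q_{1-\alpha}(\ub{F})]\).

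To obtain that lower bound I would invoke exchangeability. Because \(\ub{s}_i = \sup_{\vx' \in B_{\epsilon}(\vx_i)} S(\vx', y_i)\) is a fixed measurable function of the single datum \((\vx_i, y_i)\), applying the same map to the exchangeable sample \((\vx_1, y_1), \ldots, (\vx_{n+1}, y_{n+1})\) leaves the worst-case scores \(\ub{s}_1, \ldots, \ub{s}_{n+1}\) exchangeable. The calibration distribution \(\ub{F}\) is precisely the empirical distribution of \(\ub{s}_1, \ldots, \ub{s}_n\) augmented with the \(\delta_{\infty}\) placeholder for the unobserved test worst-case score, so \(Q_{1-\alpha}(\ub{F})\) is the standard split-conformal threshold for these transformed scores. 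The ordinary inductive CP guarantee then gives \(\PP[\ub{s}_{n+1} \leq Q_{1-\alpha}(\ub{F})] \geq 1-\alpha\), and chaining this with the inclusion above closes the argument.

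The step I expect to require the most care is the exchangeability claim: one must verify that replacing each raw score by its ball-supremum introduces no cross-index dependence that would break exchangeability, which holds exactly because the supremum is taken over the fixed radius-\(\epsilon\) ball centred at the individual point and so depends on no other sample. It is also worth flagging the apparent mismatch that the guarantee is stated through the clean \(\vx_{n+1}\) defining \(\ub{s}_{n+1}\) whereas coverage is evaluated at \(\tilde{\vx}_{n+1}\); the inequality \(S(\tilde{\vx}_{n+1}, y_{n+1}) \leq \ub{s}_{n+1}\) is precisely what reconciles them, and it remains valid when \(\ub{s}_{n+1}\) is only a sound over-approximation rather than the exact supremum, since any valid upper bound preserves the same inequality.
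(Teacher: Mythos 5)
Your argument is correct in its essentials and takes the same route as the paper: both proofs rest on (i) the domination \(S(\tilde{\vx}_{n+1}, y_{n+1}) \leq \sup_{\vx' \in B_{\epsilon}(\vx_{n+1})} S(\vx', y_{n+1})\), valid because \(\tilde{\vx}_{n+1} \in B_{\epsilon}(\vx_{n+1})\), and (ii) the observation that the map \((\vx, y) \mapsto \sup_{\vx' \in B_{\epsilon}(\vx)} S(\vx', y)\) is applied pointwise to each datum, so the transformed scores remain exchangeable and the ordinary split-CP guarantee applies to them (the paper phrases this via the pushforward distribution \(P^\top\) of worst-case points).

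The one loose end is your handling of sound-but-inexact bounds. \cref{eq: VRCP_C_score} only requires \(\ub{s}_i \geq \sup_{\vx' \in B_{\epsilon}(\vx_i)} S(\vx', y_i)\), whereas your exchangeability paragraph assumes equality, and your closing remark addresses slack only in the test-point quantity \(\ub{s}_{n+1}\), not in the calibration scores that define \(\ub{F}\). If the \(\ub{s}_i\) are strict over-approximations, then \(\ub{F}\) is not the empirical distribution of the exact suprema, and the CP guarantee you invoke for the suprema concerns a different (smaller) threshold than \(Q_{1-\alpha}(\ub{F})\). The repair is one line, and it is precisely the paper's first displayed inequality: writing \(\bar{s}_i = \sup_{\vx' \in B_{\epsilon}(\vx_i)} S(\vx', y_i)\), the pointwise bound \(\ub{s}_i \geq \bar{s}_i\) and monotonicity of quantiles give \(Q_{1-\alpha}(\ub{F}) \geq Q_{1-\alpha}(\bar{F})\), so coverage at the exact-supremum threshold implies coverage at the actual threshold used. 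Alternatively, your sketch goes through verbatim if you declare the exchangeable scores to be the verifier outputs themselves (a single fixed measurable map applied to each \((\vx_i, y_i)\)) and use the exact supremum only inside the domination step; either repair closes the gap.
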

\begin{proof}
  We have that
  \begin{align}
    \PP \left[ y_{n+1} \in C_{\epsilon}(\tilde{\vx}_{n+1}) \right] 
    & = \PP \left[S(\tilde{\vx}_{n+1},  y_{n+1}) \leq Q_{1 - \alpha}\left(\ub{F}\right)\right]\nonumber\\
    & \geq \PP \left[S(\tilde{\vx}_{n+1},  y_{n+1}) \leq
      Q_{1 - \alpha} \left( \left\{\sup_{\vx' \in B_{\epsilon}(\vx_{i})} S(\vx', y_{i})\right\}_{i=1}^{n} \cup \{\infty\} \right) \right]\nonumber\\
    & \geq \PP \left[\sup_{\vx' \in B_{\epsilon}(\vx_{n+1})} \!\!\! S(\vx', y_{n+1}) \leq
      Q_{1 - \alpha} \left( \left\{\sup_{\vx' \in B_{\epsilon}(\vx_{i})} S(\vx', y_{i})\right\}_{i=1}^{n} \!\!\! \cup \{\infty\} \right) \right]\nonumber\\
    & \geq 1 - \alpha\nonumber
  \end{align}
  Let \(P^\top\) denote the distribution of \(({\vx}^\top,y)\) where
  \({\vx}^\top=\text{argsup}_{\vx' \in B_{\epsilon}(\vx)} S(\vx', y)\)
  and \((\vx,y)\sim P\). The final inequality above holds since it is
  equivalent to constructing a CP prediction set using \(n\) i.i.d
  realisations of \(P^\top\) and evaluating it on \({\vx}_{n+1}\)
  \(\sim P^\top\). The resulting set will include the true test output
  \(y_{n+1}\) with probability at least \(1-\alpha\).
\end{proof}
\subsection{Computation of score bounds}
\paragraph{Classification} In the classification setting, we use the
score function proposed in
\citep{lei2013distribution,gendler2021adversarially}:
\begin{equation}
  \label{eq: HPS}
  S(\vx,y)=1-f(\vx)_{y},
\end{equation}
where \(f(\vx)_{y}\in (0,1)\) is the model-predicted likelihood for
label \(y\).  In this setting, to compute \(\lb{s}\) and \(\ub{s}\)
(required by VRCP--I and VRCP--C, respectively), it suffices to use NN
verification algorithms to derive the output bounds for
\(f(\vx)\). Specifically, in VRCP--I, for a test input \(\vx_{n+1}\)
and for each \(y\in Y\) we derive \(\lb{s}(\vx_{n+1}, y)\) as
\begin{equation}
  \label{eq: lbClassScore}
  \lb{s}(\vx_{n+1}, y) = 1-\ub{f(\vx_{n+1})}_y, 
\end{equation}
where \(\ub{f(\vx_{n+1})}_y\) denotes the upper bound computed by the
neural network verifier for the model-predicted likelihood of label
\(y\in Y\) and input \(\vx_{n+1}\).

In VRCP--C, for each calibration point \((\vx_i,y_i)\) we compute
\(\ub{s}(\vx_i, y_i)\) as
\begin{equation}
  \label{eq: ubClassScore}
  \ub{s}(\vx_{i}, y_i) = 1-\lb{f(\vx_i)}_{y_i},
\end{equation}
where \(\lb{f(\vx_{i})}_{y_i}\) denotes the lower bound of the model
output for label \(y_i\) given input \(\vx_i\).

\paragraph{Regression} 
In the regression tasks, we follow the conformalized quantile
regression (CQR) methodology proposed by
\citep{romano2019conformalized}. We train quantile regressors
\(f_{\alpha_{\text{low}}}\) and \(f_{\alpha_{\text{high}}}\) to
estimate the \(\alpha_{\text{low}} = \alpha/2\) and
\(\alpha_{\text{high}} = 1-\alpha/2\) quantiles of \(y \mid \vx\). In
CQR, we use the following score function:
\begin{equation}
  \label{eq: pinballScore}
  S(\vx,y) = \max{\{ f_{\alpha_{\text{low}}}(\vx) - y, y - f_{\alpha_{\text{high}}}(\vx) \} }.
\end{equation}
Unlike classification, where the label space is discrete, we cannot
construct the region explicitly by enumerating all possible outputs
\(y\). Instead, the prediction region for a given test point
\(C(\vx_{n+1})\) is constructed implicitly, by adjusting the quantile
predictions by the critical value of the calibration distribution
\(Q_{1-\alpha}(F)\), as follows:
\begin{equation}
  \label{eq:cqr_region}
  C(\vx_{n+1}) = \left[f_{\alpha_{\text{low}}}(\vx_{n+1}) - Q_{1-\alpha}(F), f_{\alpha_{\text{high}}}(\vx_{n+1}) + Q_{1-\alpha}(F)\right]
\end{equation}

In both VRCP--C and VRCP--I, the score function leverages an NN
verifier to derive the bounds over the upper and lower quantiles of
the model. In VRCP--C, we compute the worst-case calibration scores
as:
\begin{equation}
  \label{eq:pinball_vrcp_c}
  \ub{s}(\vx_i,y_i) = \max{\{ \ub{f}_{\alpha_{\text{low}}}(\vx_i) - y_i, y_i - \lb{f}_{\alpha_{\text{high}}}(\vx_i) \} }.
\end{equation}

In VRCP--I for classification, for each output we check inclusion in
\(C_{\epsilon}\) by using the best-case score \(s^{\bot}\). As
explained above, explicit enumeration is infeasible for regression,
and so we construct our robust region by replacing predicted quantiles
in \cref{eq:cqr_region} with their conservative approximations, as
follows:
\begin{equation}
  \label{eq:robust_cqr_region}
  C_{\epsilon}(\vx_{n+1}) = \left[\lb{f}_{\alpha_{\text{low}}}(\vx_{n+1}) - Q_{1-\alpha}(F), \ub{f}_{\alpha_{\text{high}}}(\vx_{n+1}) + Q_{1-\alpha}(F)\right]
\end{equation}

The above-defined region is equivalent to enumerating all possible
outputs $y$, and for each, considering the best-case score
\(\lb{s}(\vx_{n+1},y) = \max{\{
  \lb{f}_{\alpha_{\text{low}}}(\vx_{n+1}) - y, y -
  \ub{f}_{\alpha_{\text{high}}}(\vx_{n+1}) \} } \). A proof is
available in Appendix~\ref{sec: proofDetails}.

A nice property of both VRCP--I and VRCP--C is that they guarantee
that they can only increase the size of the prediction set for any
input \({\vx}\) compared to vanilla CP, thus will always attain at
least as much coverage as the vanilla CP procedure. Moreover, as we
show in \cref{sec:evaluation}, both algorithms do not trivially
inflate the size of the prediction sets and maintain a similar
distribution of set sizes. This is formalised in the \cref{prop:
  Containment}, which is proved in Appendix \ref{proof:containment}.
\begin{proposition}
  \label{prop: Containment}
  Let \(C(\vx)\) and \(C_{\epsilon}(\vx)\) be the prediction sets
  obtained using vanilla CP and VRCP (using VRCP--I or VRCP--C),
  respectively. For any input \(\vx\), we have that
  \(C(\vx) \subseteq C_{\epsilon}(\vx)\).
\end{proposition}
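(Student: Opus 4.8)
The plan is to treat the two variants separately, in each case reducing the containment $C(\vx) \subseteq C_{\epsilon}(\vx)$ to a single monotonicity observation. For \textbf{VRCP--I}, I would exploit that a clean input lies in its own perturbation ball, $\vx \in B_{\epsilon}(\vx)$. Combined with the defining property of the best-case score in \cref{eq: VRCP_I_score}, this gives, for every label $y$,
\[
\lb{s}(\vx, y) \;\leq\; \inf_{\vx' \in B_{\epsilon}(\vx)} S(\vx', y) \;\leq\; S(\vx, y).
\]
Since VRCP--I uses exactly the vanilla calibration distribution $F$, the threshold $Q_{1-\alpha}(F)$ is shared between the two sets. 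Hence $y \in C(\vx)$ means $S(\vx,y) \leq Q_{1-\alpha}(F)$, which by the displayed inequality forces $\lb{s}(\vx,y) \leq Q_{1-\alpha}(F)$, i.e. $y \in C_{\epsilon}(\vx)$. This settles the classification case; for regression I would appeal to the equivalence noted after \cref{eq:robust_cqr_region}, whereby the interval $C_{\epsilon}$ is precisely the set of $y$ passing the best-case score test, so the same argument applies. Alternatively, one checks directly that $\lb{f}_{\alpha_{\text{low}}}(\vx) \leq f_{\alpha_{\text{low}}}(\vx)$ and $\ub{f}_{\alpha_{\text{high}}}(\vx) \geq f_{\alpha_{\text{high}}}(\vx)$, so each endpoint of $C_{\epsilon}$ in \cref{eq:robust_cqr_region} is at least as extreme as the corresponding endpoint of $C$ in \cref{eq:cqr_region}, giving interval containment.

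For \textbf{VRCP--C} the inference-time score is the vanilla $S(\vx,y)$, so the widening must instead come from the threshold. The key step is to show $Q_{1-\alpha}(\ub{F}) \geq Q_{1-\alpha}(F)$. Using $\vx_i \in B_{\epsilon}(\vx_i)$ again, each worst-case calibration score satisfies $\ub{s}_i \geq \sup_{\vx'\in B_{\epsilon}(\vx_i)} S(\vx',y_i) \geq S(\vx_i,y_i) = s_i$. Thus $\ub{F}$ is obtained from $F$ by shifting each finite atom weakly upward while keeping the weights and the $\delta_{\infty}$ atom fixed. I would then invoke monotonicity of the quantile functional: since $\ub{F}$ first-order stochastically dominates $F$, its CDF is pointwise no larger, and therefore $Q_{1-\alpha}(\ub{F}) = \inf\set{t : \ub{F}(t) \geq 1-\alpha} \geq \inf\set{t : F(t) \geq 1-\alpha} = Q_{1-\alpha}(F)$. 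With both sets using the score $S(\vx,\cdot)$, membership $S(\vx,y) \leq Q_{1-\alpha}(F) \leq Q_{1-\alpha}(\ub{F})$ is preserved, yielding $C(\vx) \subseteq C_{\epsilon}(\vx)$; the regression sub-case follows identically, since only the shared threshold changes.

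The main obstacle --- really the only step beyond a one-line inequality --- is the quantile-monotonicity claim for VRCP--C. I expect to spend care making the stochastic-dominance argument precise for the specific empirical-plus-$\delta_{\infty}$ distribution used here: fixing the convention for $Q_{1-\alpha}$ as the left-continuous inverse CDF, verifying the pointwise comparison $\ub{F}(t) \leq F(t)$ directly from $\ub{s}_i \geq s_i$ at every finite $t$, and confirming that the common atom at $+\infty$ plays no adverse role. Everything else reduces to the elementary fact that a point's own perturbation ball contains that point.
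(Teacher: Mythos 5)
Your proof is correct and follows essentially the same route as the paper's: for VRCP--I both arguments reduce to the fact that $\vx \in B_{\epsilon}(\vx)$ (the paper phrases this as $C_{\epsilon}(\vx) \supseteq \bigcup_{\vx'\in B_{\epsilon}(\vx)} C(\vx') \supseteq C(\vx)$, you use the equivalent pointwise inequality $\lb{s}(\vx,y) \leq S(\vx,y)$), and for VRCP--C both rest on $Q_{1-\alpha}(\ub{F}) \geq Q_{1-\alpha}(F)$. The only difference is that the paper asserts this quantile inequality without justification, whereas you supply the stochastic-dominance argument via $\ub{s}_i \geq s_i$ --- a worthwhile addition, not a departure.
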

%


%
\section{Evaluation}
\label{sec:evaluation}
We evaluate VRCP--I and VRCP--C on classification (image) and
regression (RL) benchmarks, and compare them against the SotA
approaches on each benchmark. For all the networks used, we did not
perform adversarial training as we assume that the attack budget
\(\epsilon\) is unknown at training time. Nonetheless, both our
approaches can benefit from adversarial training, as it results in
models that are more verifiable and have tighter bounds for the same
attack budget.\footnote{Code for the experiments is available at:
  \href{https://github.com/ddv-lab/Verifiably_Robust_CP}{https://github.com/ddv-lab/Verifiably\_Robust\_CP}}

\subsection{Classification Experiments} \label{subsec: classExp} We
evaluate each method using a nominal coverage of \(1-\alpha=0.9\) and
report the 95\% confidence intervals for coverage and average set
sizes computed over 50 splits (\(n_\text{splits}=50\)) of the
calibration, holdout and test set.

\paragraph{Bounds}
We use the verification library auto\_LiRPA \citep{xu2020automatic} to
compute the output bounds for \(f(\vx)\) required in \cref{eq:
  lbClassScore} and \cref{eq: ubClassScore} for VRCP--I and VRCP--C
respectively. In particular, we use two SotA GPU-parallelised
incomplete NN verification algorithms, CROWN \cite{zhang2018efficient}
and \(\alpha\)-CROWN \cite{xu2020fast}. In brief, CROWN performs
linear bound propagation and \(\alpha\)-CROWN employs a
branch-and-bound algorithm to tighten the CROWN bounds at the expense
of slower verification times. Therefore, we use CROWN to compute the
output bounds for the TinyImageNet model and \(\alpha\)-CROWN for the
smaller CIFAR10 and CIFAR100 models.

Our CIFAR10 model with \(\alpha\)-CROWN takes \(\approx 0.5\)s per
image to compute bounds with \(\epsilon=0.03\), whereas our larger
CIFAR100 model takes \(\approx 7.2\)s with
\(\epsilon=0.02\). Comparatively, computing the smoothed scores takes
\(\approx 0.09\)s per image to compute on both models under the same
respective $\epsilon$ values. The largest model for the TinyImageNet
dataset uses CROWN to compute bounds at a rate of \(\approx 0.2\)s per
image whereas the smoothed scores take \(\approx 0.24\)s. All
measurements are made with respect to the hardware details listed in
\cref{sec: modelDetails}.

\paragraph{Attacks}
We use the PGD attack algorithm \citep{madry2017towards}, which is a
popular white-box attack algorithm to generate adversarial inputs with
respect to either the \(\ell_2\) or \(\ell_\infty\)-norm.

\paragraph{Models}
For all datasets, we train a CNN model on training set images with
random crop and horizontal flip augmentations. Full model details are
outlined in the appendix.

\paragraph{Hyperparameters}
RSCP+ based approaches use \(\sigma = 2\epsilon\), \(\beta=0.001\) and
those with PTT use \(|\sD_\text{hold}|=500\), \(b = 0.9\) and
\(T = 1/400\). For PGD, we choose a step size of \(1/255\) and compute
\(100\) steps for each attack. For CIFAR10 and CIFAR100
\(|\sD_\text{train}| = 50{,}000\) and for TinyImageNet
\(|\sD_\text{train}| = 100{,}000\). For all datasets
\(|\sD_\text{cal}| = 4{,}500\) and \(|\sD_\text{test}| = 5{,}000\).

\paragraph{Results}
In \cref{table:1}, we benchmark both our methods against the baseline
vanilla CP (which is agnostic of the attack), RSCP+ and RSCP+ with
PTT. At inference time, images are attacked using PGD to generate
\(\ell_2\)-norm bounded attacks with \(\epsilon=0.02\) for CIFAR100
and TinyImageNet, and \(\epsilon=0.03\) for CIFAR10.

In all domains, the vanilla CP method fails to construct valid
prediction sets with nominal marginal coverage, as expected.  RSCP+
maintains robust marginal coverage but produces trivial prediction
sets in all settings due to the highly conservative inflation of the
threshold with respect to the calibration scores. Using PTT improves
RSCP+'s performance but introduces significant variance in the set
sizes: in many cases, PTT still produces trivial prediction sets and
is heavily dependent on the sampled holdout set for RSCP+ to generate
useful predictions.

Both of our methods have minimal sample dependence, as demonstrated by
a very small variability in coverage and size over the \(50\)
splits. We obtain prediction sets with substantially smaller average
sizes than the other robust approaches, and hence, they provide more
informative uncertainty estimates. VRCP--I provides slightly more
efficient regions than VRCP--C. Still, it implies additional
computational overhead at inference time because it requires computing
bounds via NN verification for each test sample. In contrast, in
VRCP--C, bounds are computed only once at calibration time. On the
other hand, in an environment where we may want to change \(\epsilon\)
for different test points at inference time, VRCP--I would be a sound
choice, while VRCP--C would require re-calibration.

\begin{table}
  \centering
  \small 
  \caption{Marginal Coverage and Average Set Sizes for different
    methods on CIFAR10, CIFAR100 and TinyImageNet. All results record
    a 95\% confidence interval with \(n_\text{splits}=50\),
    \(\alpha = 0.1\), \(\sigma = 2\epsilon\),
    \(n_{\text{MC}} = 1024\), \(\epsilon = 0.03\) for CIFAR10 and
    \(\epsilon = 0.02\) otherwise.}
  \label{table:1}
  \resizebox{\columnwidth}{!}{%
    \begin{tabular}{c c c c c c c}
      \toprule
      & \multicolumn{2}{c}{CIFAR10} & \multicolumn{2}{c}{CIFAR100} & \multicolumn{2}{c}{TinyImageNet} \\
      \cmidrule(rl){2-3}
      \cmidrule(rl){4-5}
      \cmidrule(rl){6-7}
      Method & Coverage & Size & Coverage & Size & Coverage & Size \\
      \toprule
      Vanilla & {\color[HTML]{CC0000} 0.878±0.002} & {\color[HTML]{CC0000} 1.721±0.008} & {\color[HTML]{CC0000} 0.890±0.002} & {\color[HTML]{CC0000} 6.702±0.058} & {\color[HTML]{CC0000}0.886±0.002} & {\color[HTML]{CC0000} 38.200±0.252}\\
      \midrule 
      RSCP+ & 1.000±0.000 & 10.000±0.000 & 1.000±0.000 & 100.000±0.000 & 1.000±0.000 & 200.000±0.000 \\
      RSCP+ (PTT) & 0.983±0.008 & 8.357±0.780 & 0.925±0.010 & 26.375±9.675 & 0.931±0.013 & 90.644±20.063 \\
        \midrule 
      VRCP--I & 0.986±0.000 & \textbf{4.451±0.011} & 0.971±0.001 & \textbf{22.530±0.107} & 0.958±0.001 & \textbf{72.486±0.311} \\
      VRCP--C & 0.995±0.000 & 5.021±0.010 & 0.983±0.000 & 23.676±0.131 & 0.965±0.001 & 77.761±0.352 \\
      \bottomrule
    \end{tabular}
  }
\end{table}
\paragraph{Effect of increasing adversarial noise}
\cref{figure:1} shows the impact of increasing \(\epsilon\) across all
evaluated robust methods. Our methods consistently produce smaller
average set sizes with minor sample dependence, and simultaneously
provide a more conservative marginal coverage than RSCP+ (PTT). We
remark that, unlike RSCP+, we do not require a holdout set or any
score function transformations.

\begin{figure}
  \centering
    \begin{subfigure}{\textwidth}
    \hspace{-0.5em}
    \includegraphics[width=0.5\textwidth]{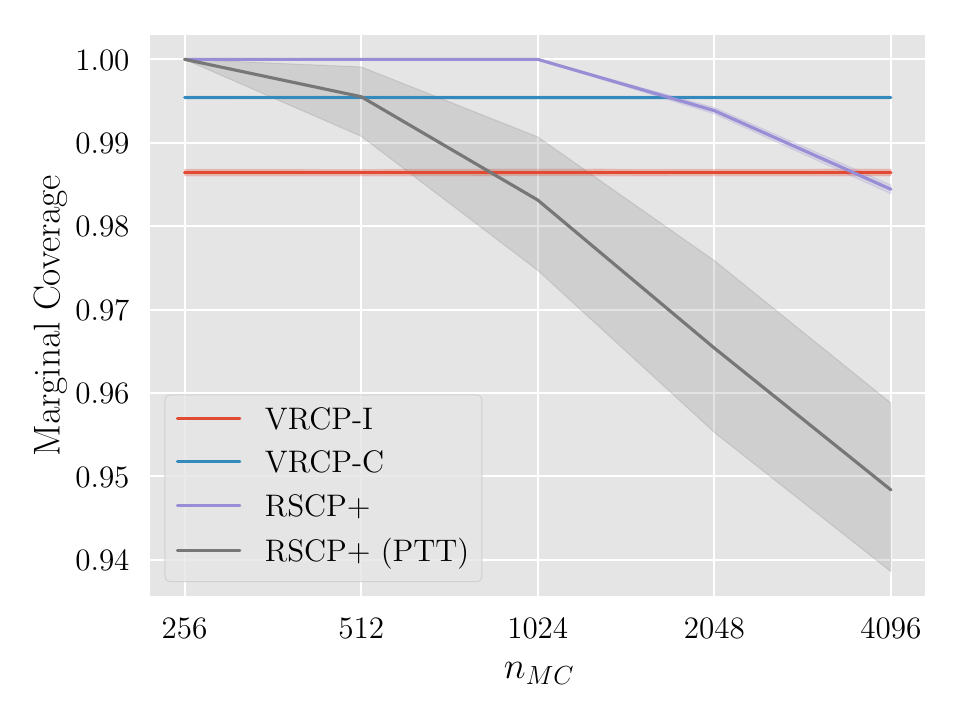}
    \hspace{-0.85em}
    \includegraphics[width=0.5\textwidth]{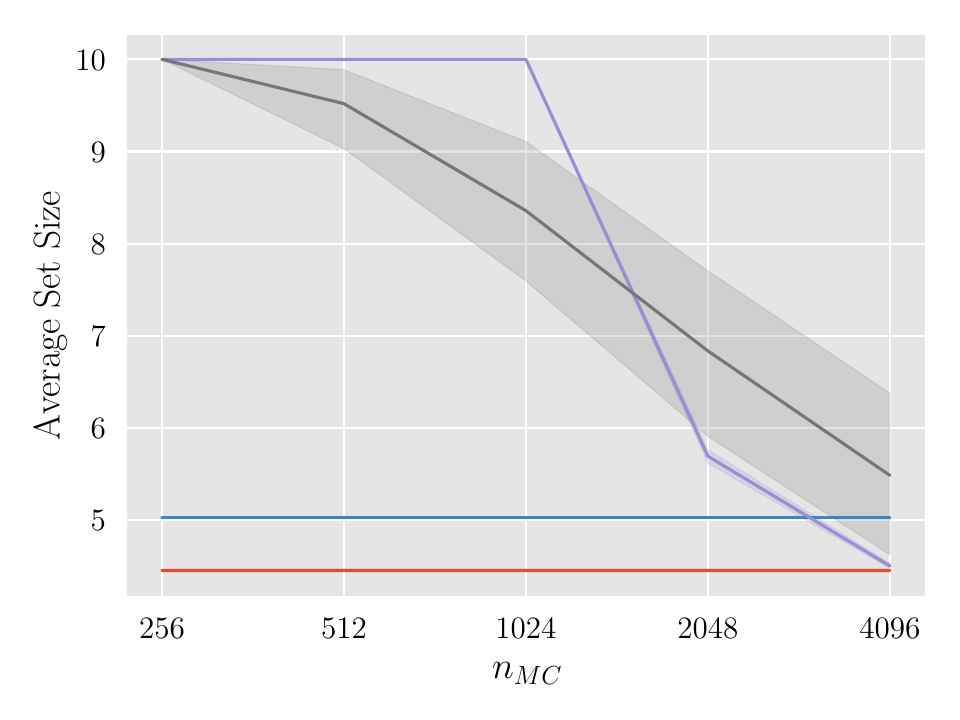}
    \subcaption{Varying \(n_{\text{MC}}\) for \(n_{\text{splits}} = 50\), \(\epsilon = 0.03\),  \(\alpha = 0.1\),  and \(\sigma = 2\epsilon\).}
    \label{figure:2}
  \end{subfigure}
  \hfill
  \begin{subfigure}{\textwidth}
    \hspace{-0.75em}
    \includegraphics[width=0.519\textwidth]{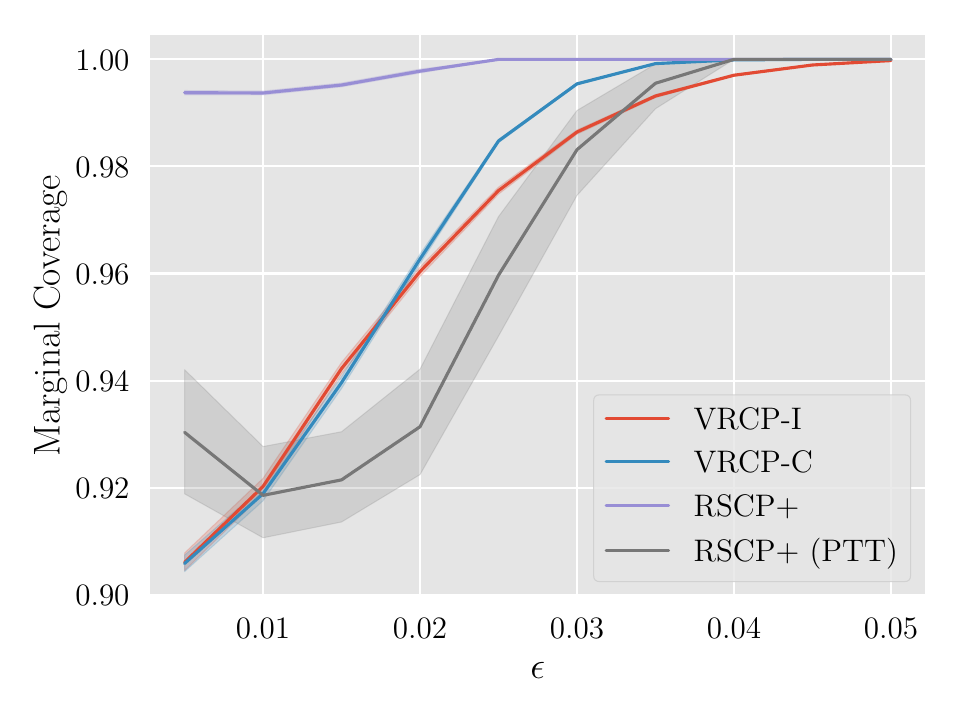}
    \hspace{-0.6em}
    \includegraphics[width=0.519\textwidth]{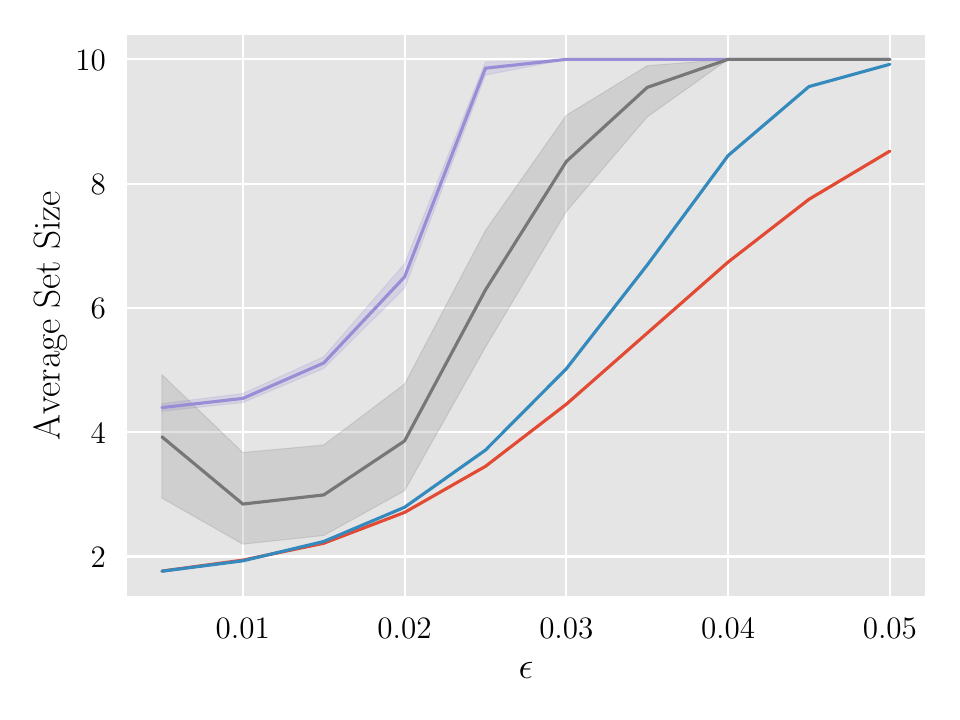}
    \subcaption{Varying the value of \(\epsilon\) for
      \(n_\text{splits}=50\), \(\sigma = 2\epsilon\),
      \(\alpha = 0.1\), and \(n_{\text{MC}}\) = 1024.}
    \label{figure:1}
  \end{subfigure}
  \caption{Marginal Coverage and Average Set Sizes on CIFAR100 with
    95\% confidence intervals.}
  \label{fig: Main Figure}
\end{figure}
\paragraph{Effect of increasing Monte-Carlo samples}
\cref{figure:2} displays the influence of the \(n_\text{MC}\)
hyperparameter on the RSCP+ based methods with respect to our CIFAR10
model. Whilst increasing samples improves the performances of
randomised smoothing approaches, we incur a large computational
overhead when computing the smoothed scores. In our experiments in
\cref{table:1} we fix \(n_\text{MC}=1024\) which is four times larger
than the value for \(n_\text{MC}\) used in previous work
\citep{gendler2021adversarially,yan2023provably} as a trade-off
between prediction quality and computation.

\begin{wraptable}{r}{6cm}
  \caption{Marginal Coverage and Average Set Sizes for \(\epsilon\)
    perturbations with respect to the \(\ell_\infty\)-norm on the
    CIFAR10 dataset. All results record a 95\% confidence interval
    with \(n_{\text{splits}} = 50\), \(\alpha = 0.1\) and
    \(\epsilon = 0.001\).}
  \label{table:2}
  \begin{tabular}{ccc}
    \toprule
    & \multicolumn{2}{c}{CIFAR10} \\
    \cmidrule(lr){2-3}
    Method &  Coverage  & Size \\
    \midrule
    Vanilla         & {\color[HTML]{CC0000} 0.872±0.002} & {\color[HTML]{CC0000} 1.737±0.007} \\
    VRCP--I         & 0.947±0.001 & \textbf{2.262±0.008} \\
    VRCP--C         & 0.931±0.001 & 2.342±0.008 \\
    \bottomrule
  \end{tabular}
\end{wraptable}
\paragraph{Beyond \(\ell_2\)-norm bounded attacks}
\cref{table:2} demonstrates that both of our methods generalise to
other \(\ell_p\)-bounded perturbations other than for when \(p=2\)
which RSCP+ is limited to. In particular, we examine the
\(\ell_\infty\), where even a small \(\epsilon\) can cause
misclassification. We experiment using CIFAR10 and use
\(\epsilon=0.001\). PGD is used to generate \(\ell_\infty\)-bounded
adversarial examples.

\paragraph{Set size distribution}
From \cref{figure: Set Size} we can visually examine the sample
dependency issue that the PTT introduces. In the splits where the
holdout set allows the PTT to make an informative transformation,
RSCP+ is able to make quite reasonable predictions, otherwise, RSCP+
just returns trivial sets. This is clearly an undesirable property and
adds significant variance to the predictions.

Both of our methods increase the spread of the average set sizes to
account for the presence of adversarial examples whilst still
maintaining a consistent distribution.

\subsection{Regression Experiments}\label{subsec:regression}
We evaluate our VRCP framework on regression tasks from the PettingZoo
Multi-Particle Environment (MPE) library \cite{terry2021pettingzoo}
for deep reinforcement learning. In these environments, the world is a
2D space containing \(n\) agents (of which some may be adversarial)
and \(m\) landmarks, which are defined as circles of fixed radii. The
position of the landmarks is fixed, and agents traverse the space
according to second-order motion laws. We evaluate our method on three
tasks:

\begin{itemize}
\item \textbf{Adversary} The {good} agents must try to reach a
  specific {goal} landmark whilst avoiding the adversaries. We use 2
  good agents, 1 adversary and 2 landmarks.
\item \textbf{Spread} All agents collaborate and minimise the distance
  to each landmark. We set the number of agents and landmarks equal to
  3.
\item \textbf{Push} In this task, there is a single good agent,
  adversary and landmark. The task is for the adversary to maximise
  the distance between the landmark and the good agent.
\end{itemize}

In our experiments, for data-generation we select 5,000 random initial
world configurations and, for each, simulate 25 Monte-Carlo
trajectories of length \(k=5\). The regression task for all
environments is to predict the upper and lower quantiles of the total
cumulative reward over the \(k\) steps, given as input the initial
world state. As in the classification experiments, we partition the
dataset into the following partitions:
\(|\sD_\text{train}| = 1{,}000\), \(|\sD_\text{cal}| = 2{,}000\) and
\(|\sD_\text{test}| = 2{,}000\).

For computing the bounds, we use CROWN \cite{zhang2018efficient} with
\(\ell_{\infty}\)-bounded perturbations. To generate the adversarially
perturbed test points, we use the Fast Gradient Sign Method as given
in \citep{goodfellow2015explaining}.

\begin{table}
  \centering
  \caption{Marginal coverage and average interval lengths for each MPE
    regression task for various \(\epsilon\) perturbations bounded by
    an \(\ell_\infty\)-norm. All results record a 95\% confidence
    interval with \(n_{\text{splits}} = 50\).}
  \label{table:mpe_results}
  \resizebox{\columnwidth}{!}{%
    \begin{tabular}{c c c c c c c c }
        \toprule
        & Perturbation & \multicolumn{2}{c}{\(\epsilon = 0.01\)} & \multicolumn{2}{c}{\(\epsilon = 0.02\)} & \multicolumn{2}{c}{\(\epsilon = 0.04\)} \\
      \cmidrule(rl){2-2}
      \cmidrule(rl){3-4}
      \cmidrule(rl){5-6}
      \cmidrule(rl){7-8}
        & Method & Coverage & Length & Coverage & Length & Coverage & Length \\
      \toprule
      \multirow{3}{*}{\rotatebox[origin=c]{90}{{\scriptsize \textbf{Adversary}}}}
        & Vanilla & {\color[HTML]{CC0000} 0.871±0.006} & {\color[HTML]{CC0000} 0.480±0.006} & {\color[HTML]{CC0000} 0.834±0.007} & {\color[HTML]{CC0000} 0.484±0.006} & {\color[HTML]{CC0000} 0.745±0.009} & {\color[HTML]{CC0000} 0.490±0.006}\\
      \cmidrule(lr){2-8}
        & VRCP--I & 0.928±0.004 & 0.605±0.006 & 0.951±0.003 & 0.673±0.006 & 0.985±0.002 & 0.855±0.006 \\
        & VRCP--C & 0.910±0.005 & 0.534±0.006 & 0.923±0.005 & 0.606±0.006 & 0.966±0.003 & 0.806±0.005 \\
      \toprule
      \multirow{3}{*}{\rotatebox[origin=c]{90}{{\scriptsize \textbf{Spread}}}} 
        & Vanilla & {\color[HTML]{CC0000} 0.864±0.005} & {\color[HTML]{CC0000} 0.595±0.005} & {\color[HTML]{CC0000} 0.834±0.005} & {\color[HTML]{CC0000} 0.602±0.005} & {\color[HTML]{CC0000} 0.768±0.006} & {\color[HTML]{CC0000} 0.612±0.005}\\
        \cmidrule(lr){2-8}
        & VRCP--I & 0.929±0.004 & 0.690±0.006 & 0.958±0.003 & 0.769±0.006 & 0.991±0.001 & 0.992±0.006 \\
        & VRCP--C & 0.908±0.005 & 0.663±0.006 & 0.935±0.004 & 0.762±0.005 & 0.977±0.002 & 1.054±0.006 \\
      \toprule
      \multirow{3}{*}{\rotatebox[origin=c]{90}{{\scriptsize \textbf{Push}}}} 
        & Vanilla & {\color[HTML]{CC0000} 0.891±0.006} & {\color[HTML]{CC0000} 0.643±0.006} & {\color[HTML]{CC0000} 0.875±0.007} & {\color[HTML]{CC0000} 0.646±0.006} & {\color[HTML]{CC0000} 0.841±0.008} & {\color[HTML]{CC0000} 0.652±0.006}\\
      \cmidrule(lr){2-8}
        & VRCP--I & 0.917±0.006 & 0.687±0.006 & 0.934±0.005 & 0.721±0.006 & 0.961±0.003 & 0.800±0.006 \\
        & VRCP--C & 0.905±0.005 & 0.674±0.006 & 0.910±0.005 & 0.711±0.005 & 0.924±0.005 & 0.795±0.005 \\
      \bottomrule
    \end{tabular}
  }
\end{table}

As seen in \cref{table:mpe_results}, both VRCP methods recover the
marginal coverage guarantees in the presence of adversarial
perturbations, whereas vanilla CP fails drastically after
\(\epsilon = 0.02\). We note that the performance of VRCP--C and
VRCP--I are similar, although VRCP--I tends to produce more
conservative intervals (without sacrificing efficiency).


%
\section{Limitations}
\label{sec:limitations}
VRCP's scalability depends on that of the underlying neural network
verifier. We evaluated VRCP on small to medium-sized neural networks.
For large networks, existing complete verification methods would
become computationally infeasible, while incomplete methods would
produce bounds that are too loose to be useful.  However, it is
important to note that since VRCP is agnostic of the specific
verification tool used, it would directly benefit from any future
advances in neural network verification. Thus, as neural network
verification tools continue to evolve and improve, so does VRCP.


%
\section{Conclusion}
We introduced Verifiably Robust Conformal Prediction (VRCP), a novel
framework that leverages conformal prediction and neural network
verification to produce prediction sets that maintain marginal
coverage under adversarial perturbations. We presented two variants:
VRCP--C, which applies verification at calibration time, and VRCP--I,
which applies verification at inference time.

Extensive experiments on classification and regression tasks
demonstrated that VRCP recovers valid marginal coverage in the
presence of \(\ell_1\), \(\ell_2\), and \(\ell_\infty\)-norm bounded
adversarial attacks while producing more accurate prediction sets than
existing methods. VRCP is the first adversarially robust CP framework
supporting regression tasks and perturbations beyond the
\(\ell_2\)-norm, achieving strong results without relying on
probabilistic smoothing or posthoc corrections.  VRCP's theoretical
guarantees and empirical performance showcase the potential of
leveraging verification tools for uncertainty quantification of
machine learning models under attack.


%
\begin{ack}
This work is supported by the “REXASI-PRO” H-EU project, call
HORIZON-CL4-2021-HUMAN-01-01, Grant agreement ID: 101070028.


\end{ack}

\bibliography{bibliography}
\bibliographystyle{IEEEtranN}

\newpage
\appendix
\section{Additional Proof Details}
\label{sec: proofDetails}
Here we prove \cref{prop: Containment} regarding the prediction sets
obtained from VRCP--I and VRCP--C.
\begin{proof}[Proof of \cref{prop: Containment}]\label{proof:containment}
  To prove \(C(\vx) \subseteq C_{\epsilon}(\vx)\) for VRCP--I it
  suffices to observe that
  \begin{align*}
    C_{\epsilon}({\vx}) = \set{y \in Y : \lb{s}({\vx},  y) \leq Q_{1 - \alpha}(F)} 
    & \supseteq \bigcup_{\vx'\in B_{\epsilon}({\vx}_{n+1})}\set{y \in Y : S({\vx'},  y) \leq Q_{1 - \alpha}(F)}\\ 
    & = \bigcup_{\vx'\in B_{\epsilon}({\vx}_{n+1})}C(\vx') \supseteq C(\vx).
  \end{align*}
  To prove the same for VRCP--C, we observe that since
  \(Q_{1 - \alpha}(\ub{F}) \geq Q_{1 - \alpha}(F)\), we have that
  \begin{equation*}
    C_{\epsilon}({\vx}) = \set{y \in Y : S({\vx},  y) \leq Q_{1 - \alpha}(\ub{F})} \supseteq \set{y \in Y : S({\vx},  y) \leq Q_{1 - \alpha}(F)} = C(\vx). \qedhere
  \end{equation*}
\end{proof}

Next, we prove the validity of the VRCP--I region for the regression
case, defined in \cref{eq:robust_cqr_region}.

\begin{proof}
  It suffices to show that all
  $y\in C_{\epsilon}(\vx_{n+1}) = [f^\bot -q, f^\top + q]$ satisfy
  $\lb{s}(\vx_{n+1},y) = \max{\{f^\bot - y, y - f^\top \} } \leq q$
  and all $y\not\in C_{\epsilon} (\vx_{n+1})$ do not. For simplicity
  of notation, we abbreviated
  $\ub{f}_{\alpha_{\text{high}}}(\vx_{n+1})$ with $f^\top$,
  $\lb{f}_{\alpha_{\text{low}}}(\vx_{n+1})$ with $f^\bot$ and
  $Q_{1-\alpha}(F)$ with $q$.

  Assume $y\in C_{\epsilon}(\vx_{n+1})$. We divide the proof into two
  cases:
  \begin{enumerate}
  \item $\lb{s}(\vx_{n+1},y)=f^\bot - y$, which implies that
    $y \in [f^\bot -q,\frac{f^\top+f^\bot}{2}]$. It suffices to show
    that $f^\bot - y\leq q$ for $y=f^\bot -q$, which is clearly
    satisfied.
  \item $\lb{s}(\vx_{n+1},y)= y - f^\top$, which implies that
    $y \in [\frac{f^\top+f^\bot}{2},f^\top+q]$. It suffices to show
    that $y - f^\top\leq q$ for $y=f^\top +q$, which is clearly
    satisfied.
  \end{enumerate}
  Finally, we show that $y\not\in C_{\epsilon} (\vx_{n+1})$ implies
  $\lb{s}(\vx_{n+1},y)>q$: if $y < f^\bot -q$, we have that
  $\lb{s}(\vx_{n+1},y)=f^\bot - y>q$. Similarly, if $y > f^\bot +q$,
  we have that $\lb{s}(\vx_{n+1},y)=y-f^\top>q$.
\end{proof}
\section{Model Details}
\label{sec: modelDetails}
All experimental results were obtained from running the code provided
in our GitHub repository on a server with 2x Intel Xeon Platinum 8360Y
(36 cores, 72 threads, 2.4GHz), 512GB of RAM and an NVIDIA A40 48GB
GPU. All pre-trained models as well as the training scripts are also
provided in the GitHub repository. In summary, the models' train and
test performances are provided in \cref{tbl: Classification
  Models,tbl: Regression Models}.

\begin{table}[hb]
  \centering
  \caption{Train and test accuracies (\%) for the classifications
    models on CIFAR10, CIFAR100, and TinyImageNet datasets. It should
    be noted that the model's accuracy has no effect on VRCP’s
    validity and only affects the efficiency of the prediction sets
    (more accurate models, tighter prediction regions)}
  \label{tbl: Classification Models}
  \begin{tabular}{lccc}
    Metric      & CIFAR10 & CIFAR100 & TinyImageNet \\
    \midrule
    Train Top-5 & 98.77   & 90.49    & 78.44 \\
    Train Top-1 & 77.80   & 67.12    & 52.81 \\
    Test Top-5  & 98.27   & 82.87    & 55.72 \\
    Test Top-1  & 76.52   & 55.73    & 29.65
  \end{tabular}
\end{table}
\subsection{Classification}

\paragraph{CIFAR10} We use 2 convolution layers with average pooling
and dropout, followed by 2 linear layers. ReLU activations across all
layers.

\paragraph{CIFAR100} We use 1 convolution layer with average pooling,
2 further convolution layers with average pooling and dropout followed
by 2 linear layers. ReLU activations across all layers.

\paragraph{TinyImageNet} We use 4 convolution layers with dropout
followed by 2 linear layers with dropout. Leaky ReLU activation
function with \(a=0.1\)

For all models we train using images augmented with random crop with 4
pixels of padding and random horizontal flip. We standardise the
TinyImageNet models with \(\mu=0.5\) and \(\sigma=0.5\) overall 3 RGB
channels.

As previously mentioned, we do not make any assumptions during
training about the perturbations we expect to see at inference
time. As such, unlike the existing SotA methods, we do not train on
smoothed or adversarially attacked images.

All models are trained for 200 epochs with a batch size of 128 using
the stochastic gradient descent optimiser with momentum set to 0.9. We
also employ a weight decay of \(5\times10^{-4}\) and a cosine
annealing learning rate scheduler.

\subsection{Regression}
For the MPE datasets, we train Deep Q-Net policies for the RL tasks
for the sole purposes of generating the appropriate datasets and
provide these policies in the GitHub repository.

The model used for the quantile regressors is a simple linear
architecture consisting of 3 layers, separated with ReLU activation
functions and dropout. We trained the model to estimate the $\alpha/2$
and $1-\alpha/2$ quantiles, where $\alpha = 0.1$, as in the other
experiments.

The exact parameters for the RL policies can be found in the config
files within the GitHub repository, however have little bearing on the
efficiency of our results, being used only for the data-generating
process. The quantile regressors are each trained for 400 epochs, with
a learning rate of \(10^{-5}\), dropout of 0.1 and a decay of
\(10^{-5}\).

\begin{table}[ht]
  \centering
  \caption{Train and test loss for the regression models in the
    adversary, spread, and push environments.}
  \label{tbl: Regression Models}
  \begin{tabular}{lccc}
    Metric & Adversary & Spread & Push \\
    \midrule
    Train  & 0.066     & 0.075 & 0.075 \\
    Test   & 0.051     & 0.053 & 0.068
  \end{tabular}
\end{table}
%


\end{document}